\documentclass[journal]{IEEEtran}
\usepackage{amsmath,amsfonts}
\usepackage{algorithm2e}
\usepackage{array}
\usepackage[caption=false,font=normalsize,labelfont=sf,textfont=sf]{subfig}
\usepackage{textcomp}
\usepackage{stfloats}
\usepackage{url}
\usepackage{verbatim}
\usepackage{graphicx}
\usepackage{cite}
\usepackage{subfig} 
\usepackage{amsthm}
\newtheorem{proposition}{Proposition}

\usepackage{makecell}
\usepackage[top=0.62in, bottom=0.82in, left=0.59in, right=0.59in]{geometry}
\usepackage{multirow} 

\usepackage{amsthm}

\usepackage{graphicx} 
\usepackage{amsmath}
\usepackage{amsmath}
\usepackage{amsmath, amssymb}
\usepackage{xcolor}	
\usepackage{amsmath} 

\usepackage{algpseudocode}
\usepackage{graphics}
\usepackage{amsmath, amsthm}

\theoremstyle{plain}

\begin{document}


    \title{Copula Function Parameter Regions in Analyzing Wireless Communications Performances
}

	\author{IEEE Publication Technology,~\IEEEmembership{Staff,~IEEE,}
	}

\author{Mona Mohsenzadeh, Saeid Pakravan, and Ghosheh Abed Hodtani

\thanks{
M.  Mohsenzadeh and G. Abed Hodtani are with the Department of Electric and Computer Engineering, Ferdowsi University, Mashhad, Iran. email: m\_mohsenzadeh@mail.um.ac.ir; hodtani@um.ac.ir. (Corresponding author: Ghosheh Abed Hodtani)} 

\thanks{
S. Pakravan is with the Department of Computer Science, University of Quebec in Montreal (UQAM), Montreal, QC, Canada.
email: pakravan.saeid@uqam.ca.

}

}

	\maketitle

\begin{abstract}

Copula functions have been widely employed in wireless communication analysis to model dependence structures and evaluate system performance. However, existing studies generally express performance metrics in terms of copula dependence parameters without explicitly characterizing their admissible regions. This letter introduces the concept of copula dependence parameter regions and investigates its significance in wireless communications. Considering a two-user wireless multiple access channel (MAC) with correlated Rayleigh fading modeled by the bivariate Farlie--Gumbel--Morgenstern (FGM) copula, explicit parameter regions are derived from communication-theoretic and probabilistic perspectives using outage probability and Pearson correlation coefficient (PCC) constraints. The results show that practical communication and statistical requirements can significantly shrink the classical copula admissible interval, rendering some theoretically admissible dependence structures infeasible. Numerical examples illustrate the proposed concept and its practical implications.

\end{abstract}

	\begin{IEEEkeywords}
		Copula-based wireless modeling, wireless MAC, correlated Rayleigh fading, dependence parameter region.
	\end{IEEEkeywords}

\section{Introduction}

Copula-based dependence modeling has attracted significant attention in wireless communication systems because it decouples marginal distributions from dependence structures \cite{ref1, ref2, ref3}. This flexibility has motivated extensive applications of copulas in wireless communications, including outage analysis, channel capacity evaluation, interference modeling, physical-layer security, and correlated fading characterization \cite{ref4, ref5, ref6}. Compared with conventional correlation-based approaches, copulas provide a more general and mathematically flexible framework for capturing nonlinear and heterogeneous dependence structures in wireless fading environments.

Among various copula families, the Farlie--Gumbel--Morgenstern (FGM) copula is particularly attractive due to its analytical tractability, low computational complexity, and ability to capture both positive and negative dependence structures \cite{ref7}. These properties make the FGM copula suitable for dependence characterization in wireless multiple access channels (MACs) and correlated fading scenarios, where closed-form analytical expressions are often desirable \cite{ref8, ref9}.

Despite the growing use of copula-based wireless models, existing studies generally characterize wireless system performance directly in terms of copula dependence parameters without explicitly identifying their admissible or feasible regions. In most available analyses, the copula dependence parameter is implicitly treated as a freely adjustable quantity over its entire mathematically admissible interval. However, from both theoretical and practical perspectives, dependence parameters are not merely unconstrained mathematical quantities. In realistic wireless systems, admissible dependence structures must also satisfy communication-theoretic, statistical, and physical consistency requirements. Consequently, not every mathematically admissible copula parameter necessarily corresponds to a meaningful or practically realizable wireless channel model. Ignoring such considerations may lead to unrealistic dependence structures, inconsistent statistical interpretations, and misleading communication performance predictions.

This observation naturally raises an important question: beyond deriving performance metrics as functions of copula parameters, what are the meaningful regions of these parameters under practical communication and statistical constraints? Addressing this question requires introducing the concept of copula dependence parameter regions and investigating their theoretical and practical significance.

This issue becomes particularly important in correlated wireless fading environments, where dependence structures directly influence outage probability, transmission reliability, and statistical channel behavior \cite{ref10, ref11, ref12}. Therefore, identifying feasible dependence parameter regions is important not only from a probabilistic perspective, but also for ensuring physically meaningful wireless system modeling and reliable communication performance analysis.

Motivated by these observations, this letter investigates feasible dependence regions for copula-based wireless MACs with correlated Rayleigh fading. Specifically, a two-user wireless MAC is considered, where channel dependence is modeled using the bivariate FGM copula. Explicit bounds on the copula dependence parameter are derived through two complementary perspectives: outage probability constraints and Pearson correlation coefficient (PCC) constraints. The obtained results characterize theoretically consistent and practically meaningful dependence regions for copula-based wireless MAC modeling, revealing that communication-theoretic and statistical constraints can substantially shrink the classical copula admissible interval. The proposed framework provides a systematic basis for identifying copula dependence parameter regions and can be extended to other copula families and correlated wireless communication scenarios.

\section{System Model and Preliminaries}

\begin{figure}[!t]
\centering
\includegraphics[width=0.44\textwidth,height=3.1cm]{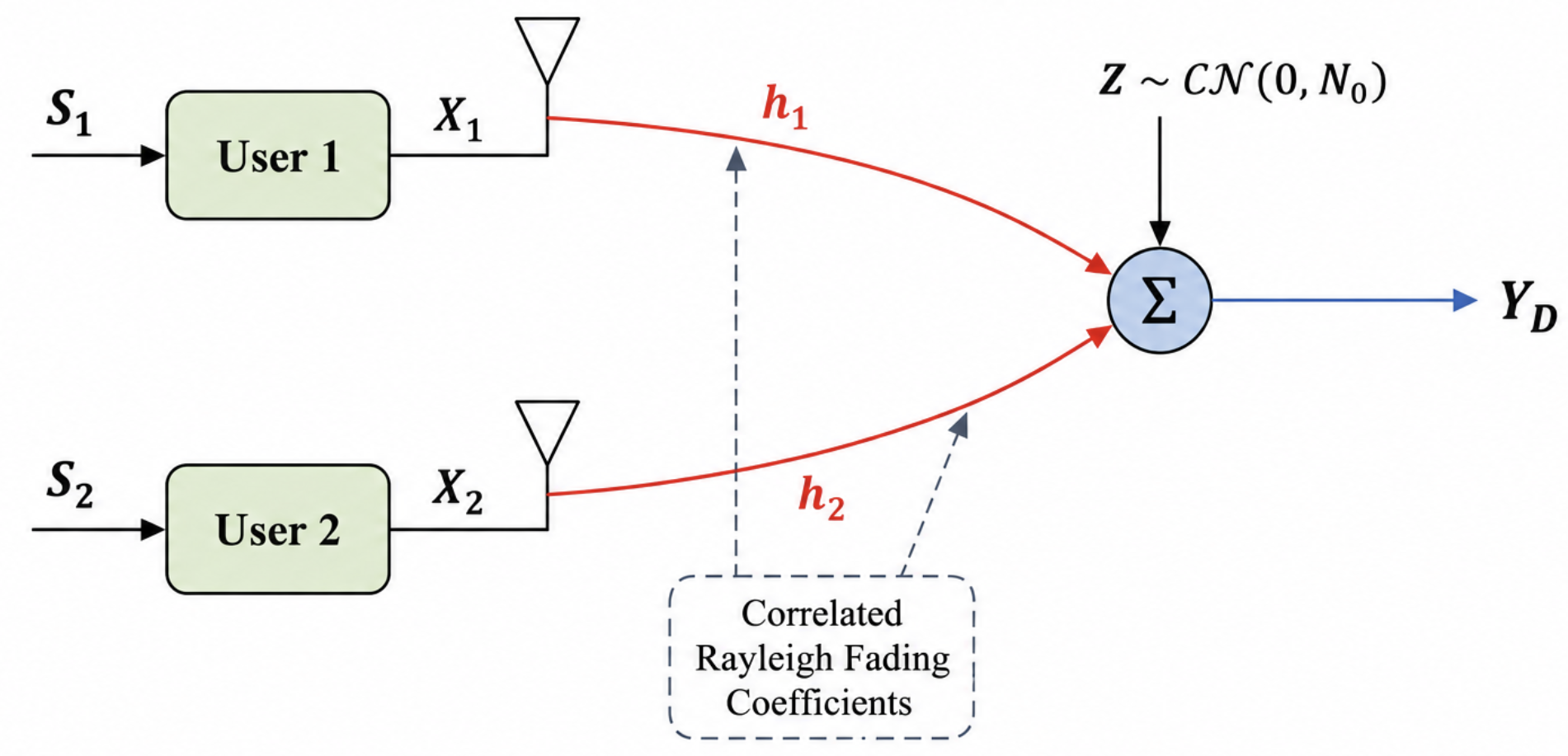}
\caption{Two-user wireless MAC with correlated Rayleigh fading channels.}
\label{fig:MAC}
\end{figure}

Consider a two-user wireless MAC operating over correlated Rayleigh fading channels, as illustrated in Fig.~\ref{fig:MAC}. The received signal is given by
\begin{equation}
Y_D = h_1 X_1 + h_2 X_2 + Z,
\label{eq:system_model}
\end{equation}
where $X_i$ and $h_i$ denote the transmitted signal and channel fading coefficient of user $i$, respectively, and $Z \sim \mathcal{CN}(0,N_0)$ represents additive white Gaussian noise. The instantaneous received signal-to-noise ratio (SNR) of user $i$ is
$\lambda_i =
\frac{|h_i|^2 P_i}{N_0},
\label{eq:SNR}$
where $P_i$ denotes the transmit power of the $i$-th user.

To characterize channel dependence, a copula-based framework is adopted. According to Sklar's theorem \cite{ref13}, the joint cumulative distribution function (CDF) of two random variables $X$ and $Y$ with marginal CDFs $F_X(x)$ and $F_Y(y)$ can be represented as
\begin{equation}
F_{X,Y}(x,y)
=
C\left(F_X(x),F_Y(y)\right),
\label{eq:Sklar}
\end{equation}
where $C(\cdot,\cdot)$ denotes a bivariate copula that completely characterizes the dependence structure independently of the marginal distributions.

Among various copula families, the bivariate FGM copula is particularly attractive because of its analytical simplicity and tractability \cite{ref8,ref13}. It is defined as
\begin{equation}
C^{\mathrm{FGM}}(u,v)
=
uv\left[1+\theta(1-u)(1-v)\right],
\label{eq:FGM}
\end{equation}
for $0 \leq u,v \leq 1$, where the dependence parameter satisfies
\begin{equation}
-1 \leq \theta \leq 1.
\label{eq:theta_interval}
\end{equation}

The parameter $\theta$ determines the dependence structure between the associated random variables, where $\theta>0$ and $\theta<0$ correspond to positive and negative dependence, respectively, while $\theta=0$ represents statistical independence. Equation~\eqref{eq:theta_interval} specifies the classical admissible range of the FGM dependence parameter from a copula-theoretic perspective. However, when the copula is employed in practical communication systems, additional communication-theoretic and statistical considerations may further restrict the meaningful region of $\theta$. Motivated by this observation, the next section derives and investigates the dependence parameter region of the considered FGM copula from complementary communication and probabilistic perspectives.

\section{Feasible Dependence Parameter Region}

This section characterizes the dependence parameter region of the FGM copula under communication-theoretic and statistical constraints.

\subsection{Information-Theoretic Feasible Region}

For the considered copula-based correlated Rayleigh fading MAC, the outage probability is given by \cite{ref14}
\begin{equation}
P_{\mathrm{out}}
=
1-
\left[
\Phi+\theta A
\right],
\label{eq:pout}
\end{equation}
where
\begin{equation}
\Phi =
\frac{
\bar{\lambda}_1 e^{-L/\bar{\lambda}_1}
}{
\bar{\lambda}_1-\bar{\lambda}_2
},
\label{eq:Phi}
\end{equation}
and
\begin{equation}
\begin{split}
A ={}
\frac{
\bar{\lambda}_1 e^{-L/\bar{\lambda}_1}
\left(1+e^{-L/\bar{\lambda}_1}\right)
}{
\bar{\lambda}_1-\bar{\lambda}_2
}
-
\frac{
2\bar{\lambda}_1 e^{-L/\bar{\lambda}_1}
}{
2\bar{\lambda}_1-\bar{\lambda}_2
}
-
\frac{
\bar{\lambda}_1 e^{-2L/\bar{\lambda}_1}
}{
\bar{\lambda}_1-2\bar{\lambda}_2
}.
\end{split}
\label{eq:A}
\end{equation}
Here, $\bar{\lambda}_i$ denotes the average received SNR-related channel parameter of user $i$, and $L=2^{2R_{\mathrm{th}}}-1$, where $R_{\mathrm{th}}$ is the required threshold rate.

To ensure reliable communication, the outage probability must satisfy
\begin{equation}
0 \leq P_{\mathrm{out}} \leq \epsilon,
\qquad
0<\epsilon\leq1.
\label{eq:outage_constraint}
\end{equation}

\begin{proposition}
For the considered FGM-based wireless MAC, the information-theoretic feasible dependence region of $\theta$ is given by
\begin{equation}
\mathcal{R}_{\theta}^{\mathrm{IT}}
=
\left[
\theta_{\min}^{\mathrm{IT}},
\theta_{\max}^{\mathrm{IT}}
\right]
\cap [-1,1],
\label{eq:R_IT}
\end{equation}
where
\begin{equation}
\theta_{\min}^{\mathrm{IT}}
=
\min\!\left(
\frac{1-\epsilon-\Phi}{A},
\frac{1-\Phi}{A}
\right),
\end{equation}
and
\begin{equation}
\theta_{\max}^{\mathrm{IT}}
=
\max\!\left(
\frac{1-\epsilon-\Phi}{A},
\frac{1-\Phi}{A}
\right).
\end{equation}
\end{proposition}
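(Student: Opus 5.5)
The plan is to substitute the outage expression \eqref{eq:pout} into the reliability constraint \eqref{eq:outage_constraint} and solve the resulting linear inequalities in $\theta$. Since $P_{\mathrm{out}} = 1-\Phi-\theta A$ is affine in $\theta$ (with $\Phi$ and $A$ independent of $\theta$ by \eqref{eq:Phi}–\eqref{eq:A}), the constraint $0\le P_{\mathrm{out}}\le\epsilon$ is equivalent to
\begin{equation*}
1-\epsilon-\Phi \;\le\; \theta A \;\le\; 1-\Phi .
\end{equation*}
The feasible set of $\theta$ is therefore an interval. Intersecting it with the copula-admissible interval \eqref{eq:theta_interval} gives $\mathcal{R}_{\theta}^{\mathrm{IT}}$.

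The key step is dividing by $A$, and this splits into two cases.
\begin{itemize}
\item If $A>0$, the inequalities keep their direction. This gives $\theta\in[(1-\epsilon-\Phi)/A,\,(1-\Phi)/A]$, and since $\epsilon>0$ the left endpoint is the smaller one.
\item If $A<0$, the inequalities flip. This gives $\theta\in[(1-\Phi)/A,\,(1-\epsilon-\Phi)/A]$, and now $(1-\Phi)/A$ is the smaller endpoint.
\end{itemize}
In both cases the interval has endpoints $\min$ and $\max$ of the two quotients. This is exactly $[\theta_{\min}^{\mathrm{IT}},\theta_{\max}^{\mathrm{IT}}]$, so the sign of $A$ never needs to be determined explicitly. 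Intersecting with $[-1,1]$ then yields \eqref{eq:R_IT}.

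The main obstacle is the degenerate case $A=0$, where the quotients are undefined. There $P_{\mathrm{out}}=1-\Phi$ does not depend on $\theta$, so the feasible region is either all of $[-1,1]$ (if $1-\epsilon\le\Phi\le 1$) or empty. I would state this case separately, or note that the proposition implicitly assumes $A\neq0$; the latter generically holds for $\bar\lambda_1\neq\bar\lambda_2$ and $\bar\lambda_1\notin\{\bar\lambda_2/2,\,2\bar\lambda_2\}$, where \eqref{eq:A} is well-defined.

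Finally, I would remark that the intersection may be empty. This happens when the outage requirement is incompatible with any admissible FGM dependence, for example when $\epsilon$ is too small relative to $1-\Phi\pm|A|$.
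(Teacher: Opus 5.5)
Your argument is correct and follows the paper's proof in Appendix~A essentially step for step: you substitute \eqref{eq:pout} into \eqref{eq:outage_constraint}, isolate $\theta A$, absorb the unknown sign of $A$ into the $\min/\max$ operators, and intersect with $[-1,1]$. Your separate treatment of the degenerate case $A=0$ is a worthwhile addition, since the paper's proof tacitly assumes $A\neq 0$.
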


\noindent
\emph{Proof:}
See Appendix~A.
\hfill$\blacksquare$

The obtained bounds reveal that outage reliability requirements directly restrict admissible copula dependence structures, particularly under stringent target-rate constraints.

\subsection{Probabilistic Feasible Region}

To characterize feasible dependence regions from a probabilistic perspective, the PCC is employed as a statistical consistency measure. For the considered FGM-based correlated Rayleigh fading model, the PCC admits an affine representation with respect to the copula dependence parameter $\theta$, thereby enabling explicit probabilistic feasibility characterization.

\begin{proposition}
For the considered wireless MAC with bivariate FGM-dependent fading coefficients, the PCC admits the affine representation
\begin{equation}
\rho
=
\Psi+\theta B,
\label{eq:rho_theta}
\end{equation}
where
\begin{equation}
\Psi =
\frac{
\bar{\lambda}_1^2(\bar{\lambda}_1+3\bar{\lambda}_2)
-
(\bar{\lambda}_1+\bar{\lambda}_2)^3
}{
(\bar{\lambda}_1+\bar{\lambda}_2)^3
},
\label{eq:Psi}
\end{equation}
and
\begin{equation}
\begin{split}
B ={}
\frac{
5\bar{\lambda}_1^2(\bar{\lambda}_1+3\bar{\lambda}_2)
}{
4(\bar{\lambda}_1+\bar{\lambda}_2)^3
}
-
\frac{
\bar{\lambda}_1^2(\bar{\lambda}_1+6\bar{\lambda}_2)
}{
2(\bar{\lambda}_1+2\bar{\lambda}_2)^3
}
-
\frac{
2\bar{\lambda}_1^2(2\bar{\lambda}_1+3\bar{\lambda}_2)
}{
(2\bar{\lambda}_1+\bar{\lambda}_2)^3
}.
\end{split}
\label{eq:B}
\end{equation}
\end{proposition}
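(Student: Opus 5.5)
The plan is to compute the PCC directly from the FGM-based joint density of the fading quantities and to observe that every $\theta$-dependent term enters linearly. Under Rayleigh fading, the instantaneous SNRs $\lambda_1,\lambda_2$ are exponential with means $\bar{\lambda}_1,\bar{\lambda}_2$, so $F_i(x)=1-e^{-x/\bar{\lambda}_i}$. By Sklar's theorem \eqref{eq:Sklar} and the FGM copula \eqref{eq:FGM}, the joint density is
\begin{equation*}
f(x,y)=f_1(x)f_2(y)\bigl[1+\theta\bigl(2e^{-x/\bar{\lambda}_1}-1\bigr)\bigl(2e^{-y/\bar{\lambda}_2}-1\bigr)\bigr],
\end{equation*}
since the copula density is $c(u,v)=1+\theta(1-2u)(1-2v)$. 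This density is affine in $\theta$. Consequently, every expectation $\mathbb{E}[g(\lambda_1,\lambda_2)]$ has the form $\mathbb{E}_0[g]+\theta\,\mathbb{E}_1[g]$, where $\mathbb{E}_0$ is the expectation under independence and $\mathbb{E}_1$ is the integral of $g$ against the product of the marginals times the correction term.

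The first step is to fix the pair of variables whose PCC is meant, following the construction of \cite{ref14} that produced \eqref{eq:pout}. The denominators $(\bar{\lambda}_1+\bar{\lambda}_2)^3$, $(\bar{\lambda}_1+2\bar{\lambda}_2)^3$ and $(2\bar{\lambda}_1+\bar{\lambda}_2)^3$ appearing in \eqref{eq:Psi}--\eqref{eq:B} point to this: they are exactly what arises when second-order moments of exponentials are integrated over the ordering region $\{\lambda_1\le\lambda_2\}$ (or along its boundary). On that region the combined rates are $1/\bar{\lambda}_1+1/\bar{\lambda}_2$, $2/\bar{\lambda}_1+1/\bar{\lambda}_2$ and $1/\bar{\lambda}_1+2/\bar{\lambda}_2$, so each resulting term scales like $\bar{\lambda}_1^2\bar{\lambda}_2^{\,k}/(a\bar{\lambda}_1+b\bar{\lambda}_2)^3$. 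I will therefore write the required moments as integrals of the form $\int\!\!\int x^m y^n e^{-\alpha x-\beta y}\,dx\,dy$ over the relevant region, and evaluate each in closed form with the Gamma integral $\int_0^\infty t^k e^{-ct}dt=k!/c^{k+1}$.

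Expanding $(2e^{-x/\bar{\lambda}_1}-1)(2e^{-y/\bar{\lambda}_2}-1)$ into its four exponential terms splits the $\theta$-coefficient into four rate combinations with weights $4,-2,-2,1$. These account for the coefficients $5/4$, $-1/2$, $-2$ in $B$ after normalization. The $\theta^0$ part must reproduce $\Psi$; the ``$-1$'' in $\Psi$ comes from subtracting the product of normalized means.

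The key structural point that makes $\rho$ affine rather than a ratio of affine functions is that the normalizing quantities (the marginal means and standard deviations) depend only on the marginals. Copulas leave the marginals invariant, so these quantities are $\theta$-free and the normalization is a $\theta$-independent constant. I will verify this explicitly for the chosen variables, since it is precisely what allows \eqref{eq:rho_theta} to hold exactly. I expect the main obstacle to be identifying exactly which pair of variables and which normalization \cite{ref14} uses, so that the constants match \eqref{eq:Psi}--\eqref{eq:B}; the remaining integrations are routine. As a check, I will confirm that $\theta=0$ gives $\rho=\Psi$, and test the symmetric limit $\bar{\lambda}_1\to\bar{\lambda}_2$ numerically.
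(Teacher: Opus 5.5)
Your plan follows the same route as the paper's Appendix~B: take the PCC of the two exponential gains under the FGM density. The step you postpone, matching the constants, is where the argument fails, and choosing a different pair of variables does not rescue it. For the pair the paper actually uses, $(\lambda_1,\lambda_2)$ over the full quadrant, your marginal-invariance argument is valid and the $\theta$-part factorizes (expectations under the marginals):
\[
\mathbb{E}_1[\lambda_1\lambda_2]=\mathbb{E}\bigl[\lambda_1(2e^{-\lambda_1/\bar{\lambda}_1}-1)\bigr]\,\mathbb{E}\bigl[\lambda_2(2e^{-\lambda_2/\bar{\lambda}_2}-1)\bigr]=\Bigl(-\frac{\bar{\lambda}_1}{2}\Bigr)\Bigl(-\frac{\bar{\lambda}_2}{2}\Bigr)=\frac{\bar{\lambda}_1\bar{\lambda}_2}{4},
\]
so $\rho=\theta/4$. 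Only the separate rates $2/\bar{\lambda}_i$ occur, so no mixed denominator such as $(\bar{\lambda}_1+2\bar{\lambda}_2)^3$ can arise. Moreover, the displayed $\Psi$ equals $-\bar{\lambda}_2^2(3\bar{\lambda}_1+\bar{\lambda}_2)/(\bar{\lambda}_1+\bar{\lambda}_2)^3\neq 0$. At $\theta=0$ the gains are independent, so the PCC of any pair $g(\lambda_1),h(\lambda_2)$ vanishes, and your own planned check ``$\theta=0$ gives $\rho=\Psi$'' fails. For example, at $\bar{\lambda}_1=\bar{\lambda}_2$ the displayed formula reads $-\tfrac12+\tfrac{\theta}{8}$ against the true $\tfrac{\theta}{4}$.

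Your ordering-region diagnosis is, in fact, exactly where the displayed constants come from. Set $a=1/\bar{\lambda}_1$, $b=1/\bar{\lambda}_2$ and $J(\alpha,\beta)=\int_0^\infty\!\int_0^x xy\,e^{-\alpha x-\beta y}\,dy\,dx=(3\alpha+\beta)/\bigl(\alpha^2(\alpha+\beta)^3\bigr)$. One checks term by term that
\[
1+\Psi=a^2b^2J(a,b),\qquad B=a^2b^2\bigl[J(a,b)+4J(2a,2b)-2J(2a,b)-2J(a,2b)\bigr],
\]
that is, $\Psi+\theta B=\bigl(\mathbb{E}[\lambda_1\lambda_2\mathbf{1}\{\lambda_2\le\lambda_1\}]-\bar{\lambda}_1\bar{\lambda}_2\bigr)/(\bar{\lambda}_1\bar{\lambda}_2)$ under the FGM law. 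Note that the region is $\{\lambda_2\le\lambda_1\}$, and that your weights $4$ and $1$ merge into the $5/4$ term. This quantity is a half-region mixed moment minus the full product of means. It is not the correlation coefficient of any pair of random variables you have named.

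The two pillars of your plan are therefore incompatible. A $\theta$-free normalization holds for pairs $(g(\lambda_1),h(\lambda_2))$, and that forces $\Psi=0$. Jointly defined variables such as $\lambda_2\mathbf{1}\{\lambda_2\le\lambda_1\}$ have means and variances that in general depend on $\theta$, so for them $\rho$ is no longer affine. The deferred identification is thus not a routine detail but the entire content, and it cannot be completed: for the variables the statement refers to, the affine form holds with $\Psi=0$, $B=1/4$, not with \eqref{eq:Psi}--\eqref{eq:B}. The paper's Appendix~B does not close this either, since it only asserts that evaluating the mixed moments yields the displayed constants.
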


\noindent
\emph{Proof:}
See Appendix~B.
\hfill$\blacksquare$

The affine dependence of $\rho$ on $\theta$ reveals that statistical consistency constraints induce explicit linear restrictions on admissible copula dependence structures, thereby enabling tractable feasibility-region characterization.

Since the PCC must satisfy the admissibility condition
\begin{equation}
-1 \leq \rho \leq 1,
\label{eq:pcc_constraint}
\end{equation}
the corresponding probabilistic feasible region is obtained as follows.

\begin{proposition}
The probabilistic feasible dependence region of the FGM dependence parameter $\theta$ is given by
\begin{equation}
\mathcal{R}_{\theta}^{\rho}
=
\left[
\theta_{\min}^{\rho},
\theta_{\max}^{\rho}
\right]
\cap [-1,1],
\label{eq:R_rho}
\end{equation}
where
\begin{equation}
\theta_{\min}^{\rho}
=
\min\!\left(
\frac{-1-\Psi}{B},
\frac{1-\Psi}{B}
\right),
\end{equation}
and
\begin{equation}
\theta_{\max}^{\rho}
=
\max\!\left(
\frac{-1-\Psi}{B},
\frac{1-\Psi}{B}
\right).
\end{equation}
\end{proposition}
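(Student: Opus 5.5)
The plan is to combine the affine representation $\rho=\Psi+\theta B$ from the preceding proposition (Appendix~B) with the admissibility condition $-1\le\rho\le1$. This turns the PCC constraint into a pair of linear inequalities in $\theta$. I will then solve these inequalities for $\theta$, treating the sign of $B$ explicitly, and finally intersect the result with the classical FGM interval $[-1,1]$. The argument is the same in structure as the proof of the information-theoretic region in Appendix~A, with $(\Phi,A,\epsilon)$ replaced by $(\Psi,B,\pm1)$.

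\textbf{Step 1.} Substitute $\rho=\Psi+\theta B$ into $-1\le\rho\le1$. Subtracting $\Psi$ gives
\begin{equation*}
-1-\Psi\le \theta B\le 1-\Psi .
\end{equation*}

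\textbf{Step 2.} Divide by $B$, assuming $B\neq0$, and split on its sign.
\begin{itemize}
\item If $B>0$, the inequalities are preserved, so $\frac{-1-\Psi}{B}\le\theta\le\frac{1-\Psi}{B}$.
\item If $B<0$, the inequalities reverse, so $\frac{1-\Psi}{B}\le\theta\le\frac{-1-\Psi}{B}$.
\end{itemize}
In both cases the solution set is the closed interval whose left endpoint is the smaller of $\frac{-1-\Psi}{B}$ and $\frac{1-\Psi}{B}$, and whose right endpoint is the larger. These endpoints are exactly $\theta_{\min}^{\rho}$ and $\theta_{\max}^{\rho}$ as defined in the statement. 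The $\min/\max$ form therefore covers both signs at once.

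\textbf{Step 3.} The FGM copula \eqref{eq:FGM} is only a valid copula for $\theta\in[-1,1]$ by \eqref{eq:theta_interval}. Intersecting the interval from Step~2 with $[-1,1]$ therefore yields \eqref{eq:R_rho}.

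\textbf{Main obstacle.} The only delicate point is the degenerate case $B=0$. I would handle it by noting two facts:
\begin{itemize}
\item When $B=0$, $\rho=\Psi$ does not depend on $\theta$. The constraint is then either satisfied for every $\theta\in[-1,1]$ (when $|\Psi|\le1$) or for none.
\item Because $B$ is a nonconstant rational function of $(\bar\lambda_1,\bar\lambda_2)$, the set where $B=0$ is a nongeneric parameter configuration.
\end{itemize}
The proposition is therefore read under the standing assumption $B\neq0$, just as $A\neq0$ is implicit in the information-theoretic proposition. I would also remark that the interval from Step~2 is always nonempty, since $\frac{2}{|B|}>0$. The final intersection with $[-1,1]$ may still be empty if the PCC expression is inconsistent for the given $\bar\lambda_i$, and the set notation in \eqref{eq:R_rho} accommodates this.
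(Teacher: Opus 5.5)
Your proposal is correct and follows the paper's proof in Appendix~C. Both substitute $\rho=\Psi+\theta B$ into $-1\le\rho\le1$, solve for $\theta$ with the sign of $B$ absorbed into the $\min/\max$ endpoints, and intersect with $[-1,1]$; your explicit handling of the degenerate case $B=0$ is a small addition that the paper leaves implicit.
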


\noindent
\emph{Proof:}
See Appendix~C.
\hfill$\blacksquare$

Combining the copula-theoretic, information-theoretic, and probabilistic constraints, the final admissible dependence parameter region is given by
\begin{equation}
\mathcal{R}_{\theta}
=
[-1,1]
\cap
\mathcal{R}_{\theta}^{\mathrm{IT}}
\cap
\mathcal{R}_{\theta}^{\rho},
\label{eq:final_region}
\end{equation}
where $\mathcal{R}_{\theta}^{\mathrm{IT}}$ and $\mathcal{R}_{\theta}^{\rho}$ denote the regions obtained from \eqref{eq:R_IT} and \eqref{eq:R_rho}, respectively. This intersection represents the set of FGM dependence parameters that are simultaneously copula-admissible, outage-consistent, and statistically meaningful.

{\bf{Remark 1.}}
The final dependence parameter region in \eqref{eq:final_region} may become substantially narrower than the classical FGM interval $\theta\in[-1,1]$, particularly under stringent outage requirements or strong statistical consistency constraints. This observation highlights that the practically meaningful region of a copula dependence parameter can be considerably smaller than its mathematically admissible range.

\section{Numerical Results}

This section presents numerical results for the feasible dependence parameter regions of the considered copula-based wireless MAC.

Fig.~\ref{fig:it_feasible_region} illustrates the information-theoretic feasible region in the $(\theta,R_{\mathrm{th}})$ plane for different outage constraints. The results demonstrate that increasing the target transmission rate progressively shrinks the admissible dependence region. This behavior arises because stringent transmission-rate requirements impose tighter outage constraints on correlated fading realizations. Consequently, large portions of the classical FGM admissible interval become communication-theoretically infeasible at high-rate operating points. Moreover, the feasible region contracts nonuniformly with respect to $\theta$, indicating that different dependence structures are affected differently by reliability constraints.

\begin{figure}[!t]
\centering
\includegraphics[width=0.44\textwidth,height=4cm]{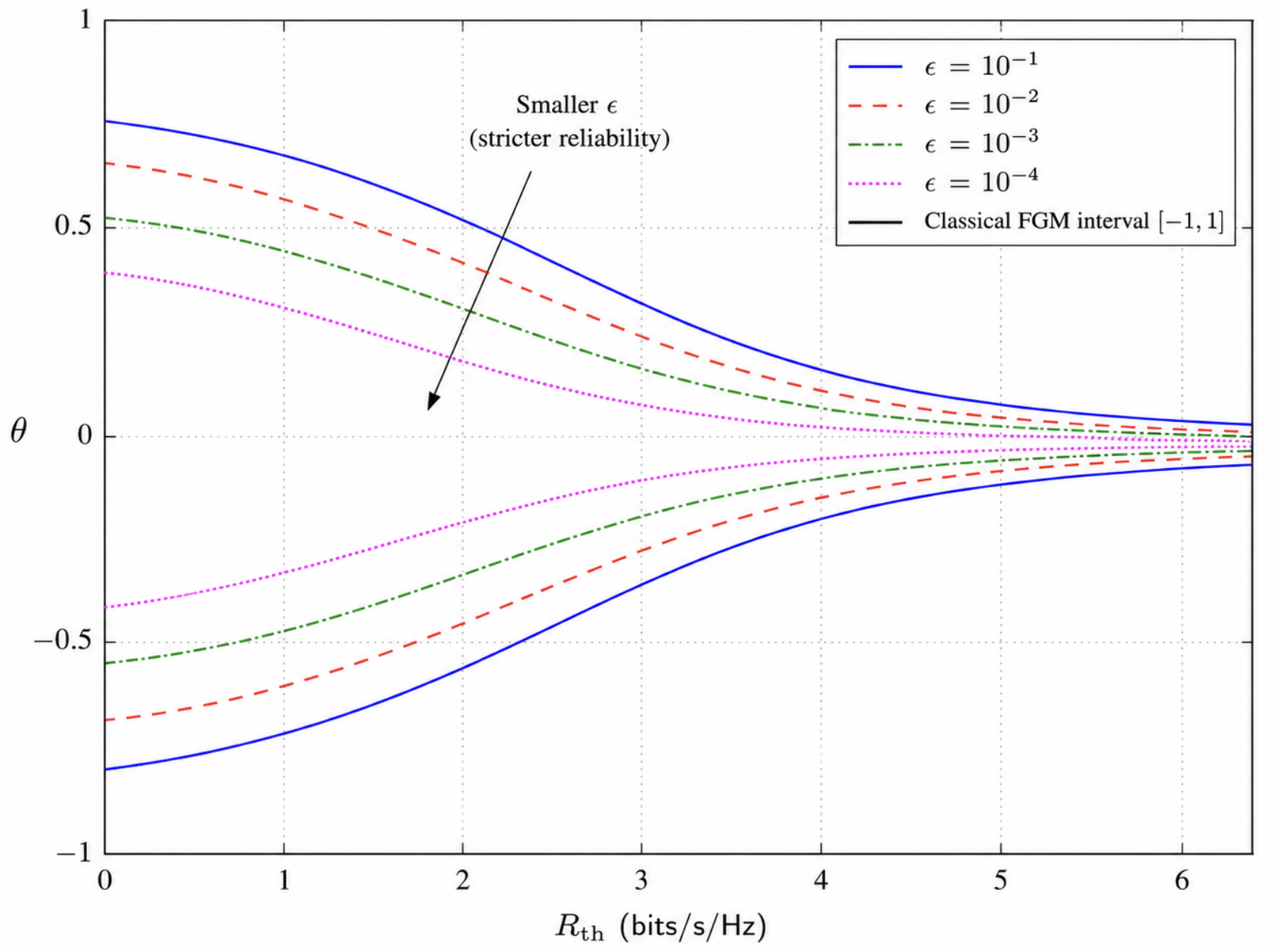}
\caption{Information-theoretic feasible dependence regions in the $(\theta,R_{\mathrm{th}})$ plane for different outage constraints $\epsilon$, with $\bar{\lambda}_1=10$ dB and $\bar{\lambda}_2=8$ dB.}
\label{fig:it_feasible_region}
\end{figure}

Fig.~\ref{fig:region_intersection} compares the classical copula admissible interval with the information-theoretic, probabilistic, and final feasible regions. Although the FGM copula theoretically allows $\theta\in[-1,1]$, the obtained results show that communication-theoretic and statistical consistency constraints jointly exclude substantial portions of this interval. In particular, the final admissible region obtained from \eqref{eq:final_region} becomes significantly narrower than the classical copula-theoretic region, thereby revealing the gap between mathematical admissibility and physical feasibility in wireless dependence modeling.

\begin{figure}[!t]
\centering
\includegraphics[width=0.44\textwidth,height=2.9cm]{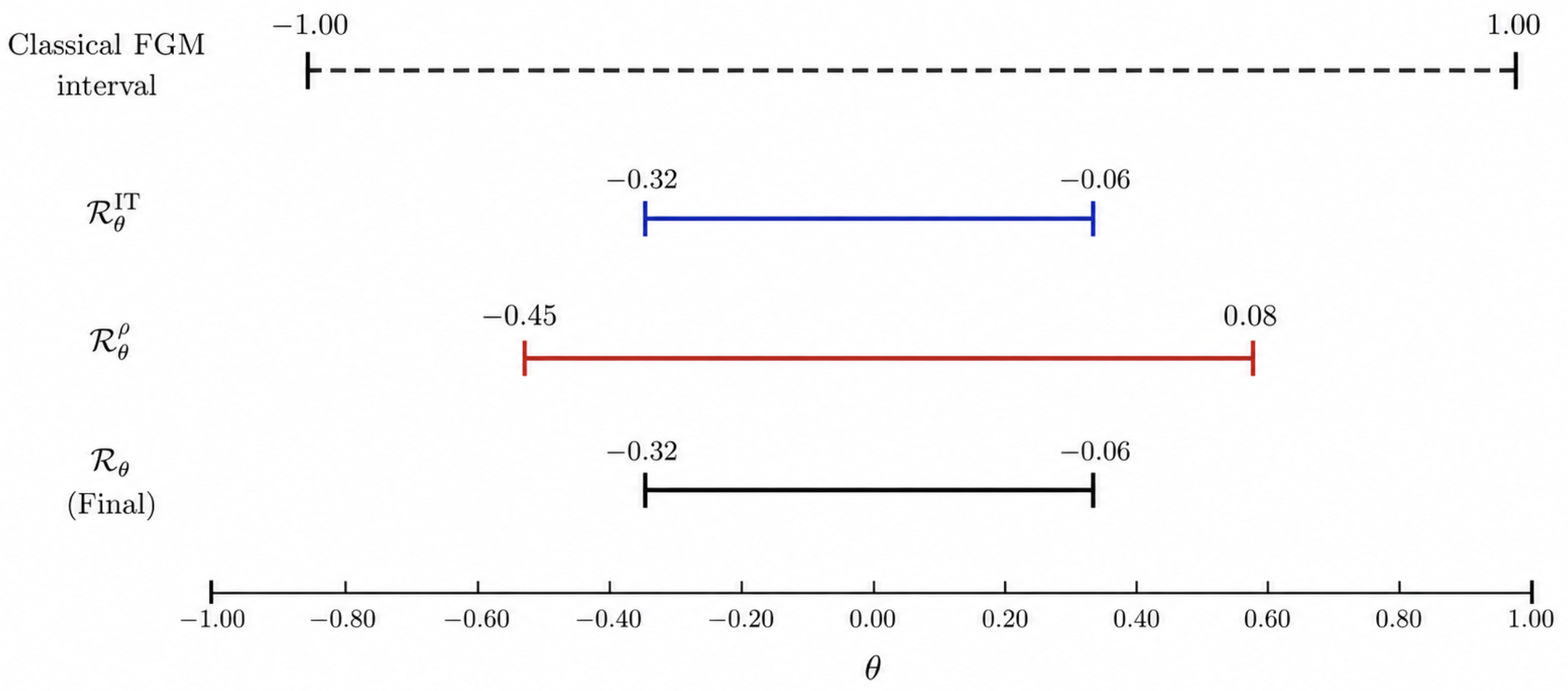}
\caption{
Comparison of the classical FGM admissible interval, the information-theoretic feasible region, the probabilistic feasible region, and the final admissible dependence region for $\bar{\lambda}_1=10$ dB, $\bar{\lambda}_2=8$ dB, $R_{\mathrm{th}}=1.5$, and $\epsilon=0.1$.
}
\label{fig:region_intersection}
\end{figure}

Fig.~\ref{fig:outage_landscape} depicts the outage probability landscape as a function of $\theta$ and $R_{\mathrm{th}}$, together with the feasibility boundary corresponding to $P_{\mathrm{out}}\leq\epsilon$. The results show that feasible operating regions become significantly restricted as the reliability requirements become more stringent. Furthermore, the sensitivity of outage performance to the dependence parameter becomes increasingly pronounced at high transmission rates, indicating stronger coupling between channel dependence and communication reliability.

\begin{figure}[!t]
\centering
\includegraphics[width=0.44\textwidth,height=3.9cm]{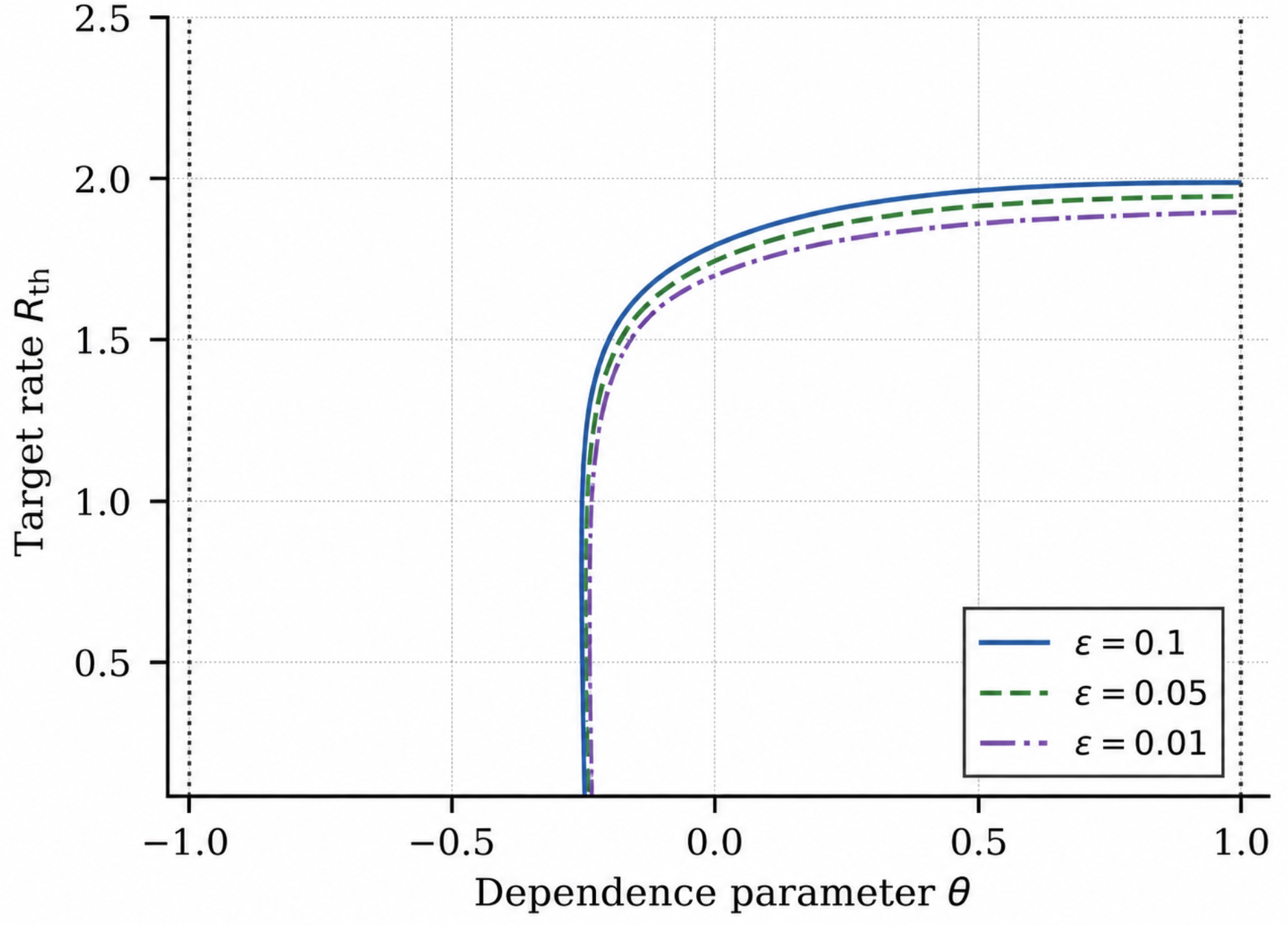}
\caption{Information-theoretic feasible dependence regions in the $(\theta,R_{\mathrm{th}})$ plane for different outage constraints $\epsilon$, with $\bar{\lambda}_1=10$ dB and $\bar{\lambda}_2=8$ dB.}
\label{fig:outage_landscape}
\end{figure}

Finally, Fig.~\ref{ig:admissible_width} illustrates the width of the final admissible dependence region as a function of the target transmission rate. It is observed that the physically admissible region may become severely restricted under stringent communication requirements, indicating that only a limited subset of mathematically admissible copula dependence structures remains physically realizable in practical wireless systems.

\begin{figure}[!t]
\centering
\includegraphics[width=0.44\textwidth,height=3.9cm]{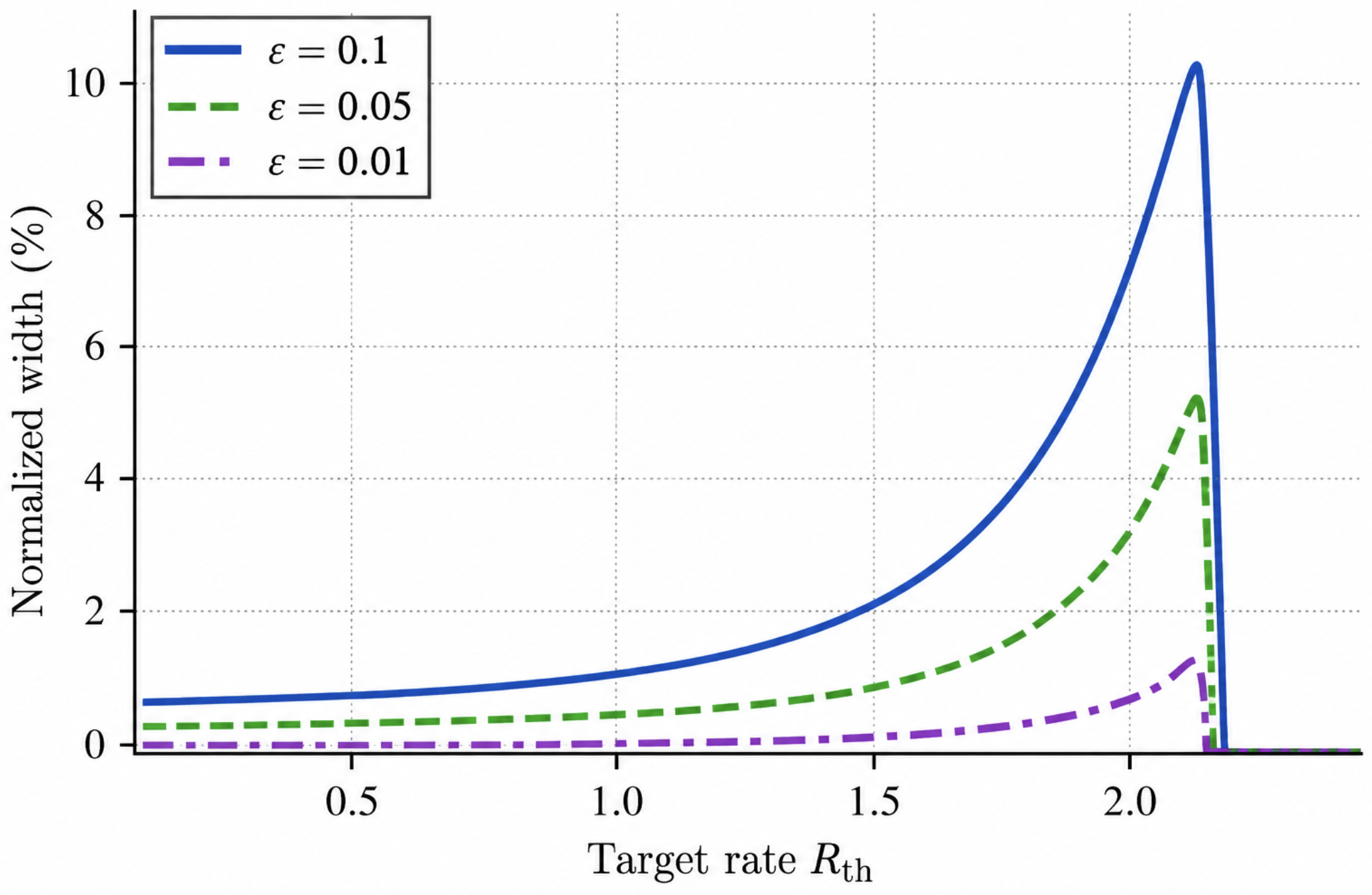}
\caption{Normalized width of the final admissible dependence region versus the target transmission rate $R_{\mathrm{th}}$ for different outage reliability requirements $\epsilon$.}
\label{ig:admissible_width}
\end{figure}

\section{Conclusion}

This letter introduced and investigated the concept of copula dependence parameter regions in wireless communication systems. For a copula-based wireless MAC with correlated Rayleigh fading, explicit regions of the FGM dependence parameter were derived from communication-theoretic and probabilistic perspectives using outage probability and PCC constraints. The results demonstrated that practical communication and statistical requirements can substantially shrink the classical copula admissible interval, indicating that not all mathematically admissible dependence structures are equally meaningful in practical communication scenarios. The proposed framework provides a systematic basis for identifying and characterizing copula dependence parameter regions and can be extended to other copula families and wireless communication models.

\appendices
\section{}

Substituting \eqref{eq:pout} into \eqref{eq:outage_constraint} gives
\begin{equation}
1-\epsilon
\leq
\Phi+\theta A
\leq
1.
\end{equation}
Equivalently,
\begin{equation}
1-\epsilon-\Phi
\leq
\theta A
\leq
1-\Phi.
\end{equation}
Since $A$ may be positive or negative, dividing by $A$ may preserve or reverse the inequality direction; therefore, both cases are compactly represented using the $\min(\cdot)$ and $\max(\cdot)$ operators. Solving the above inequality with respect to $\theta$ yields
\begin{equation}
\theta
\in
\left[
\min\!\left(
\frac{1-\epsilon-\Phi}{A},
\frac{1-\Phi}{A}
\right),
\,
\max\!\left(
\frac{1-\epsilon-\Phi}{A},
\frac{1-\Phi}{A}
\right)
\right].
\end{equation}

Finally, intersecting the obtained interval with the admissible FGM region $\theta\in[-1,1]$ establishes \eqref{eq:R_IT}. \hfill$\blacksquare$

\section{}
\label{appendix:proof_prop2}

Let $X$ and $Y$ denote the instantaneous channel power gains associated with the two users. Under Rayleigh fading, the marginal distributions are exponential with
\begin{equation}
f_X(x)
=
\frac{1}{\bar{\lambda}_1}
e^{-x/\bar{\lambda}_1},
\qquad
f_Y(y)
=
\frac{1}{\bar{\lambda}_2}
e^{-y/\bar{\lambda}_2},
\qquad x,y\geq0.
\end{equation}

Using the FGM copula, the joint density function is \cite{ref13}
\begin{equation}
f_{X,Y}(x,y)
=
f_X(x)f_Y(y)
\left[
1+\theta
\left(1-2F_X(x)\right)
\left(1-2F_Y(y)\right)
\right].
\label{eq:appB_joint}
\end{equation}

The PCC is defined as
\begin{equation}
\rho_{X,Y}
=
\frac{
\mathbb{E}[XY]
-
\mathbb{E}[X]\mathbb{E}[Y]
}{
\sqrt{
\mathrm{Var}(X)\mathrm{Var}(Y)
}
}.
\label{eq:appB_pcc}
\end{equation}
Substituting \eqref{eq:appB_joint} into the PCC definition and evaluating the mixed moments yields the affine form in \eqref{eq:rho_theta}, where $\Psi$ and $B$ are given by \eqref{eq:Psi} and \eqref{eq:B}, respectively. \hfill$\blacksquare$

\section{}
\label{appendix:proof_prop3}

Substituting \eqref{eq:rho_theta} into the PCC admissibility condition $-1\leq\rho\leq1$ and solving for $\theta$ yields
\begin{equation}
\theta
\in
\left[
\min\!\left(
\frac{-1-\Psi}{B},
\frac{1-\Psi}{B}
\right),
\,
\max\!\left(
\frac{-1-\Psi}{B},
\frac{1-\Psi}{B}
\right)
\right].
\end{equation}
Intersecting the obtained interval with $\theta\in[-1,1]$ establishes \eqref{eq:R_rho}. \hfill$\blacksquare$

	\bibliographystyle{IEEEtran}
	\bibliography{Main_Document}

	\vfill
	
\end{document}